\documentclass[10pt, a4paper]{article}
\usepackage{graphicx} 
\usepackage{amsfonts}
\usepackage{amsmath}
\usepackage{amssymb}
\usepackage{amsthm}
\usepackage[T1]{fontenc}
\usepackage{framed}

\newcommand{\bra}[1]{\left\langle{#1}\right|}
\newcommand{\ket}[1]{\left|{#1}\right\rangle}
\newcommand{\braket}[2]{\left\langle{#1}|{#2}\right\rangle}
\newcommand{\ketbra}[2]{|{#1}\rangle\langle{#2}|}
\newcommand{\braOPket}[3]{\left\langle{#1}|{#2}|{#3}\right\rangle}
\newcommand{\Tr}[0]{\operatorname{Tr}}
\newcommand{\tr}[0]{\operatorname{tr}}
\newcommand{\Mat}[0]{\mathbf{Mat}_{\mathbb{R}_{\geq 0}}}
\newcommand{\id}[0]{\mathrm{id}}
\newcommand{\HC}[0]{\mathcal{H}_C}
\newcommand{\HCC}[0]{\mathcal{H}_{\widetilde{C}}}
\newcommand{\HCin}[0]{\mathcal{H}_{C_{\mathrm{in}}}}
\newcommand{\HCout}[0]{\mathcal{H}_{C_{\mathrm{out}}}}

\newtheorem{theorem}{Theorem}
\newtheorem{definition}{Definition}
\newtheorem{corollary}{Corollary}
\newtheorem{lemma}{Lemma}

\title{Tracing the Loop: \\ \large Non-Causal Computation, Partial Traces, \& Postselected Entanglement}
\author{Mark Carney \\ \emph{Quantum Village Inc.}}
\date{September 2026}

\begin{document}

\maketitle

\begin{abstract}
    This paper gives a categorical interpretation of Baumeler \& Wolf's logically consistent non-causal circuits, connecting them to postselected quantum teleportation. Looped feedback is represented by a trace in the category of non-negative matrices, and it is shown that the traced process is stochastic precisely when the induced loop transition matrix has trace $1$ for every external input, a condition shown to be equivalent to a unique fixed point for the loop for each input to a deterministic circuit. A classical non-causal circuit is represented by a measure-and-prepare quantum channel with an internal register utilising a maximally entangled Bell-state with postselection. The main result is that classical logical consistency is equivalent to the postselected Bell outcome having, for loop dimension $d$, probability exactly $1/d^2$ for each classical input distribution. The subsequent normalised conditional output then agrees exactly with the classical categorical trace. This work identifies a class of quantum Bell-postselection constructions whose conditional evolution maintains linear dependence on classical input distributions.
\end{abstract}

\section{Introduction and Preliminaries}

This work gives a categorical interpretation of Baumeler \& Wolf's non-causal circuits \cite{BaumelerWolf2017} relative to categorical quantum mechanics and Postselected Closed Timelike Curves (P-CTCs) from Lloyd \emph{et al.} \cite{Lloyd2011-zc}. The main result may be stated as follows:

\begin{center}
\begin{framed}
    \noindent A non-causal wire is a trace, with entanglement being the quantum realization through cup/cap decomposition.
\end{framed}
\end{center}

To that end, this work builds on Ara\'ujo \emph{et al.} \cite{Arajo2017}, who first identified a correspondence between quantum process matrices and a linear sub-class of P-CTCs. Ara\'ujo \emph{et al.}'s setting involves checking validity under arbitrary local operations, whilst in this work we consider an examination of a specified classical circuit across external inputs and outputs. Indeed, the technique used below, that of constructing a partial trace by means of restriction of an ambient trace, is itself taken from Malherbe, Scott, \& Selinger \cite{Malherbe2012}. The contribution from this work is an explicit categorical formulation of classical feedback admissibility, relating it to a measure-and-prepare embedding through Bell-state preparation and postselection.

\subsection{Non-Causal Computation}

The main starting point for this work is Baumeler \& Wolf \cite{BaumelerWolf2017}. Their paper proposes a model of non-causal computation that characterizes logical consistency of a non-causal computation by the existence of a unique fixed point. By \cite[\S II]{BaumelerWolf2017}, let $f: C \to C$ be a Markov chain obtained by cutting the `loop' in a non-causal circuit, the circuit is logically consistent if and only if $$ \exists!x\quad f(x) = x $$

Similarly, for 0-1 Markov matrix $\hat{H}$, a closed circuit is consistent if and only if for some $e_c$, a vector with one $1$ in the $c^{th}$ postion and $0$'s elsewhere, $e_c = \hat{H}e_c$ for some unique $c$ - specifically, the diagonal of $\hat{H}$ has a unique entry of $1$.

Likewise, an open circuit with input $a$, output $x$, and looping wire $c$ is consistent if and only if for every $a \in A$ there is a unique $(x, c) \in X \times C$ such that \cite[\S III]{BaumelerWolf2017} $$ (e_x \otimes e_c)^T \hat{H} (e_a \otimes e_c) = 1 $$ 

In the stocastic case, their earlier paper \cite{Baumeler2016-so} shows that consistency arises when the average number of fixed points is one, read as $\tr(\hat{H}) = 1$.

\subsection{Category Theory}

To better understand this work, it is best to think that category theory posits the algebra of composition, and a category thereby abstracts the notion of a process. For this work, a good intuition is that a category contains objects and composable arrows between them, with the only structure being sequential composition. Monoidal categories add parallel composition within a process, and traced monoidal categories add feedback to these processes.

This work makes use of traced monoidal categories, in particular the fact that every compact closed category carries a canonical trace \cite{Joyal1996-bg}. We will relate this to string diagrams and their cup/cap semantics, utilising the categorical treatment of quantum mechanics by Abramsky \& Coecke \cite{Abramsky2004-xp} and Coecke \& Kissinger \cite{Coecke2017-kx}. Relevant results will be cited along the way. 

Properties of the Markov category $\mathbf{Stoch}$ are found in Fritz \cite{Fritz2020-dd} and Cho \& Jacobs \cite{ChoJacobs2019}.

\subsection{Postselected Closed Timelike Curves}

G\"{o}del \cite{Gdel1949} exhibited a cosmological solution to Einstein's equations containing closed timelike curves (CTCs), and CTCs received their first quantum mechanical treatment by Deutsch \cite{Deutsch1991}. This work uses the presentation of postselected closed timelike curves (P-CTCs) in papers by Lloyd \emph{et al.}, specifically \cite{Lloyd2011b, Lloyd2011-zc}. A P-CTC emulates a CTC by means of quantum teleportation, crucially without the classical correction step. 

A closed timelike curve system $C$ is essentially a Hilbert space equipped with a closed loop - this is when one half of a maximally entangled pair is fed into the ``past'' initial state of a system end of the loop. When the system evolves, the ``future'' end of the loop is then projected, together with the other member of the entangled pair, onto the same maximally entangled state. This works in postselection by keeping only the runs in which this projection succeeds. The net effect is that the loop is closed. That is, the state emerging from the future end is identified with the state that entered at the initial end. In string diagram terms, the entangled pair is a cup and the projection is a cap, so a P-CTC is then just a wire bent round to meet itself - fitting the intuition of a `closed loop' very well. 

The probability of the projection succeeding is inversely proportional to the square of the dimension of the looped system. In a classical setting, the projection vanishes exactly for any ``paradoxical'' evolutions with no consistent assignment to the loop. However, under quantum dynamics zero probability may also arise from interference. Deutsch's model \cite{Deutsch1991}, fixes the state on the loop by a self-consistency condition on density matrices. Lloyd \emph{et al.}'s P-CTC map is linear on the input state for unnormalized maps, however normalised evolution for successful postselection requires conditioning that divides the probability of success, and so is non-linear. Connecting this to Baumeler and Wolf \cite[§VI]{BaumelerWolf2017} therefore requires some restrictions on the P-CTC construction.

Later, Section \ref{sec:Bell} will show that Baumeler and Wolf's classical model is the measure-and-prepare special case of a P-CTC, with their logical-consistency condition equivalent to a Bell projection success with probability $1/d^2$, for $d$ the loop dimension. Normalization multiplies the output by $d^2$ and so the conditional evolution is just the stochastic map found by tracing the (classical) loop wire. This then retains linear dependence on input distributions that in \cite[\S 6]{BaumelerWolf2017} distinguishes their model from Deutsch.

\section{Classical Non-Causal Feedback as Partial Categorical Traces}

Let $M(b \mid a)$ denote the matrix entry\footnote{This is the same as the regular notation $M_{ba}$, except it allows emphasis on input-output interpretations.} of $M$ with entries in $\mathbb{R}_{\geq 0}$ that takes input $a$ to output $b$. For a column-stochastic matrix, $$ M(b \mid a) = \Pr(b | a) $$ Each input $a$ gives a column of probabilities over $b$ with $$ \sum_b M(b \mid a) = 1 $$

\begin{definition}
    Let $\mathbf{Mat}_{\mathbb{R}_{\geq 0}}$ denote the category whose objects are finite sets and whose morphisms $M : A \rightarrow B$ are matrices with entries $M_{ba} \in \mathbb{R}_{\geq 0}$. 
    
    Let \emph{$\mathbf{Stoch} \subseteq \mathbf{Mat}_{\mathbb{R}_{\geq 0}}$} be the symmetric monoidal subcategory of column-stochastic matrices.
\end{definition}

Composition in $\Mat$ is just matrix multiplication, with $$ (N \circ M)(c \mid a) = \sum_b N(c \mid b)M(b \mid a) $$ Monoidal products are just Cartesian products, $A \otimes B = A \times B$, and the tensor product of morphisms are just the standard Kronecker product $$ (M \otimes N)(b,d \mid a,c) = M(b \mid a)N(d \mid c) $$ The monoidal unit is the singleton set $I = \{ * \}$.

For a finite set $C$ define the self-dual object $C^* = C$, and let the $\eta_C$ coevaluation morphism and $\epsilon_C$ evaluation morphism be defined by:
\begin{align*}
    \eta_C : I \rightarrow C \otimes C, \quad \eta_C(c, c' \mid *) = \delta_{c,c'} \\
    \epsilon_C : C \otimes C \rightarrow I, \quad \epsilon_C(* \mid c,c') = \delta_{c,c'}
\end{align*} where $\delta_{c,c'}$ is the Kronecker delta function, returning 1 if $c = c'$, 0 if $c \neq c'$. $\eta_C$ represents cup, and $\epsilon_C$ represents cap semantics, satisfying the snake equations. This structure gives $\mathbf{Mat}_{\mathbb{R}_{\geq 0}}$ a compact structure as a matrix category over a commutative semi-ring (see \cite{Coecke2017-kx}). 

Clearly $\eta_C$ and $\epsilon_C$ are morphisms of $\mathbf{Mat}_{\mathbb{R}_{\geq 0}}$
but for dimension $d > 1$ they are not morphisms of $\mathbf{Stoch}$; the column of $\eta_C$ sums to $d = |C|$, whilst a column of $\epsilon_C$ indexed by some $(c, c')$ for $c, c' \in C$ sums to $\delta_{c, c'}$, which is zero whenever $c \neq c'$. $\mathbf{Stoch}$ does not inherit the compactness of $\mathbf{Mat}_{\mathbb{R}_{\geq 0}}$.

Let a morphism $P$ be 
\begin{align}\label{eq:morphismP}
    P: A \times C \rightarrow X \times C
\end{align}
Its matrix entries are given by $P(x, c_{\mathrm{out}} \mid a, c_{\mathrm{in}})$. Here, $a$ and $x$ are external input/output respectively, and $c_{\mathrm{in}}, c_{\mathrm{out}}$ are internal input/output within $C$.

\begin{definition}[Categorical Trace]\label{def:catTrace}
The categorical trace of $P$, as defined in (\ref{eq:morphismP}), is given by $$ \Tr_C(P) : A \rightarrow X $$ defined as $$ \Tr_C(P) = (\id_X \otimes \epsilon_C) \circ (P \otimes \id_C) \circ (\id_A \otimes \eta_C) $$ \end{definition}

We may calculate the matrix entries for $\Tr_C$ given by 

\begin{align} 
    \Tr_C(P)(x \mid a ) = \sum_{c, c' \in C} P(x, c' \mid a, c)\delta_{c,c'} = \sum_{c \in C} P(x, c \mid a, c)
\end{align}

$\Tr_C(P)$ is the induced partial trace on $\mathbf{Stoch}$, and has value only if $P$ is stochastic (and is undefiend otherwise). This partial trace comes from the inclusion $\mathrm{Stoch} \subseteq \mathrm{Mat}_{\mathbb R_{\ge0}}$, as found in \cite[Prop. 3.20]{Malherbe2012} Given that $\eta_C$ and $\epsilon_C$ lie outside $\mathbf{Stoch}$, $\mathrm{Tr}_C(P)$ is not necessarily stochastic even when $P$ is stochastic. Theorem \ref{thm:stoch-is-Tr} below characterizes exactly when it arrives into $\mathbf{Stoch}$. This is the categorical treatment of logical consistency defined in \cite{BaumelerWolf2017}.

\subsection{Logical Consistency is Trace Admissibility}

\begin{lemma}\label{lemma:morphismP}
Let $P$ be a stochastic matrix as per eq. (\ref{eq:morphismP}). Then for every pair $(a, c_{\mathrm{in}}) \in A \times C$ $$ \sum_{x \in X}\sum_{c_{\mathrm{out}}\in C} P(x, c_{\mathrm{out}} \mid a, c_{\mathrm{in}}) = 1 $$ 
\end{lemma}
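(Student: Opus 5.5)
This lemma is essentially an unpacking of the definition of column-stochasticity in $\mathbf{Stoch}$, applied to the morphism $P : A \times C \rightarrow X \times C$ of eq. (\ref{eq:morphismP}). The plan is to identify the columns and rows of $P$ with the relevant product sets, and then rewrite the column-sum condition as the stated iterated sum.

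First, recall that in $\mathbf{Mat}_{\mathbb{R}_{\geq 0}}$ the monoidal product on objects is the Cartesian product. So the domain of $P$ is the finite set $A \times C$ and the codomain is $X \times C$. The matrix of $P$ therefore has columns indexed by pairs $(a, c_{\mathrm{in}}) \in A \times C$ and rows indexed by pairs $(x, c_{\mathrm{out}}) \in X \times C$, with entries $P(x, c_{\mathrm{out}} \mid a, c_{\mathrm{in}})$.

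Second, since $P$ is assumed stochastic, it is a morphism of $\mathbf{Stoch}$, which by definition consists of column-stochastic matrices. Thus for each input (column) index $(a, c_{\mathrm{in}})$ we have $\sum_{b} P(b \mid a, c_{\mathrm{in}}) = 1$, where $b$ ranges over the output set $X \times C$. This is the same condition $\sum_b M(b \mid a) = 1$ recorded at the start of this section, now with input $(a, c_{\mathrm{in}})$.

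Third, writing $b = (x, c_{\mathrm{out}})$, the single sum over the finite set $X \times C$ splits as an iterated sum $\sum_{x \in X} \sum_{c_{\mathrm{out}} \in C}$. This is just reindexing a finite sum of non-negative reals over a product set, which gives the claim.

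There is no real obstacle here. The only point requiring care is being explicit that the input index of $P$ is the pair $(a, c_{\mathrm{in}})$, not $a$ alone. Consequently the normalization holds separately for every fixed loop input $c_{\mathrm{in}}$, and it sums over both the external output $x$ and the loop output $c_{\mathrm{out}}$. This distinction matters later, because the trace $\Tr_C(P)$ instead sums along the diagonal $c_{\mathrm{in}} = c_{\mathrm{out}}$, and that diagonal sum need not equal $1$.
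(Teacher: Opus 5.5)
Your proposal is correct and matches the paper's proof. Both identify the column index of $P$ with the pair $(a, c_{\mathrm{in}})$, use column-stochasticity over the output set $X \times C$, and rewrite that single sum as the iterated sum over $x$ and $c_{\mathrm{out}}$.
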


\begin{proof}
    For $(a, c_{\mathrm{in}})\in A \times C$ it suffices to rearrange the sum for a column-stochastic matrix $$ \sum_{(x,c_{\mathrm{out}}) \in X \times C} P(x, c_{\mathrm{out}} \mid a, c_{\mathrm{in}}) = 1 $$ Note that for each summand, $a$ and $c_{\mathrm{in}}$ are fixed.
\end{proof}

For each external input $a$, define  $H_a : C \rightarrow C $ as $$ H_a (c_{\mathrm{out}} \mid c_{\mathrm{in}}) = \sum_{x \in X} P(x, c_{\mathrm{out}} \mid a, c_{\mathrm{in}}) $$ $H_a$ is stochastic, since $$ \sum_{c_{\mathrm{out}} \in C} H_a(c_{\mathrm{out}} \mid c_{\mathrm{in}}) = \sum_{c_{\mathrm{out}}\in C}\sum_{x \in X} P(x, c_{\mathrm{out}} \mid a, c_{\mathrm{in}}) = 1 $$ 

\begin{theorem}\label{thm:stoch-is-Tr}
    The following are equivalent: 
    \begin{enumerate}
        \item $\Tr_C(P) : A \rightarrow X$ is stochastic, \emph{i.e.} $\Tr_C(P) \in \mathbf{Stoch}$,
        \item For each $a \in A$, $\tr(H_a) = 1$,
        \item For each $a \in A$, $$ \sum_{x \in X} \sum_{c \in C}P(x, c\mid a, c) = 1 $$
    \end{enumerate}
\end{theorem}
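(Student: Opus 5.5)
The plan is to prove the cycle of equivalences by reducing every condition to one scalar quantity per external input $a$, namely the column sum $S(a) := \sum_{x \in X} \Tr_C(P)(x \mid a)$. Everything follows from the entrywise formula $\Tr_C(P)(x \mid a) = \sum_{c\in C} P(x,c\mid a,c)$ computed after Definition \ref{def:catTrace}, together with the definition of $H_a$. No fixed-point theory is needed; the argument is bookkeeping with finite sums of non-negative reals.

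\textbf{Step 1: (1) $\Leftrightarrow$ (3).} Since all entries of $P$ are non-negative, every entry of $\Tr_C(P)$ is non-negative, so $\Tr_C(P)$ is a morphism of $\Mat$. It lies in $\mathbf{Stoch}$ exactly when each column sums to $1$, that is, when $S(a)=1$ for all $a\in A$. Substituting the entrywise formula gives
\begin{align*}
S(a) = \sum_{x\in X}\sum_{c\in C} P(x,c\mid a,c),
\end{align*}
which is precisely the expression in (3). So (1) and (3) are the same condition once the trace formula is unfolded.

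\textbf{Step 2: (2) $\Leftrightarrow$ (3).} By definition, $\tr(H_a) = \sum_{c\in C} H_a(c\mid c) = \sum_{c\in C}\sum_{x\in X} P(x,c\mid a,c)$. Since the sums are finite, they can be exchanged, and this again equals $S(a)$. Hence $\tr(H_a)=1$ for all $a$ holds if and only if (3) holds.

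\textbf{Where the care goes.} There is no real obstacle; the only points needing attention are conventions. First, column-stochasticity means summing over outputs $x$ for a fixed input $a$, matching $M(b\mid a)$. Second, the diagonal of $H_a$ is taken with $c_{\mathrm{in}} = c_{\mathrm{out}} = c$, which is exactly what the cup and cap enforce through $\delta_{c,c'}$. Lemma \ref{lemma:morphismP} and the stochasticity of $H_a$ are not needed for the equivalence itself. They only show that $0 \le \tr(H_a) \le |C|$, which explains why the condition is non-trivial. I would add a remark to this effect.
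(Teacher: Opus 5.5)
Your proposal is correct and takes essentially the same approach as the paper. Both reduce all three conditions to the single column sum
\[
\sum_{x}\Tr_C(P)(x\mid a)=\sum_{x}\sum_{c}P(x,c\mid a,c)=\tr(H_a),
\]
and then use non-negativity of the entries to identify stochasticity of $\Tr_C(P)$ with this sum equalling $1$ for every $a$. Your remark that stochasticity of $P$ (and of $H_a$) is not needed for the equivalence holds for the paper's argument as well.
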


Condition 2 in Theorem \ref{thm:stoch-is-Tr} is exactly per \cite[eq. (2)]{BaumelerWolf2017}. Theorem \ref{thm:stoch-is-Tr} thus shows that Baumeler-Wolf's trace condition is precisely the categorical trace defined above going into $\mathbf{Stoch}$.

\begin{proof}
    Note that 
    \begin{align*} 
        \sum_{x \in X} \Tr_C(P) (x \mid a) &= \sum_{x \in X} \sum_{c \in C} P(x, c \mid a, c) \\
        &= \sum_{c \in C } H_a (c \mid c) \\
        &= \tr (H_a)
    \end{align*} 
    Given $\Tr_C(P)$ is non-negative, it is stochastic if and only if the quantity above $=1$ for every $a$. 

\end{proof}

\begin{theorem}
    Let $P$ be deterministic with the same definition as eq. (\ref{eq:morphismP}), so that there exist component functions $h: A \times C \to X$ and $g: A \times C \to C$ such that $$ P(x, c_{\mathrm{out}} \mid a, c_{\mathrm{in}}) = \delta_{x, h(a, c_{\mathrm{in}})} \delta_{c_{\mathrm{out}}, g(a, c_{\mathrm{in}})} $$ The following are equivalent:
    \begin{enumerate}
        \item $\Tr_C(P) \in \mathbf{Stoch}$
        \item For each $a$ there exists a unique $c_a$ such that $g(a, c_a) = c_a$.
    \end{enumerate}
\end{theorem}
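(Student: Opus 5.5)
The plan is to reduce the statement to Theorem \ref{thm:stoch-is-Tr}, which already shows that $\Tr_C(P) \in \mathbf{Stoch}$ if and only if $\tr(H_a) = 1$ for every $a \in A$. What remains is to compute $\tr(H_a)$ explicitly in the deterministic case and show that it counts the fixed points of $g(a,-)$.

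First, for a fixed $a$, I will compute $H_a$ from its definition. Summing over $x$ gives
$$ H_a(c_{\mathrm{out}} \mid c_{\mathrm{in}}) = \sum_{x\in X}\delta_{x,h(a,c_{\mathrm{in}})}\,\delta_{c_{\mathrm{out}},g(a,c_{\mathrm{in}})} = \delta_{c_{\mathrm{out}},g(a,c_{\mathrm{in}})}, $$
because exactly one $x$, namely $h(a,c_{\mathrm{in}})$, contributes. So $H_a$ is the $0$-$1$ matrix of the function $g(a,-): C \to C$. Its diagonal entries are then
$$ H_a(c \mid c) = \delta_{c,g(a,c)}, $$
and therefore
$$ \tr(H_a) = \bigl|\{c \in C : g(a,c) = c\}\bigr|, $$
the number of fixed points of $g(a,-)$. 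This matches the Baumeler--Wolf reading of the trace as a count of fixed points.

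Next I will conclude from this count. Since $\tr(H_a)$ is a non-negative integer, $\tr(H_a) = 1$ holds exactly when $g(a,-)$ has a unique fixed point $c_a$. Combining this with the equivalence $(1)\Leftrightarrow(2)$ of Theorem \ref{thm:stoch-is-Tr}, applied for every $a$, gives the claim.

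There is no real obstacle here. The only point needing care is the collapse of the sum over $x$: it relies on $h$ being a function, so that exactly one $x$ contributes. For completeness I will also record that the traced map is
$$ \Tr_C(P)(x \mid a) = \delta_{x,h(a,c_a)}, $$
which is deterministic whenever it exists.
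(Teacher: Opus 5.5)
Your proposal is correct and is essentially the paper's argument. The paper sums the trace formula over $x$ directly to get $\sum_{c}\delta_{c,g(a,c)}$, the number of fixed points of $g(a,-)$. You obtain the same count as $\tr(H_a)$ and then invoke Theorem \ref{thm:stoch-is-Tr}, whose proof is exactly the identity $\sum_x \Tr_C(P)(x \mid a) = \tr(H_a)$.
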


\begin{proof}
    Substituting the categorical trace formula we get 
    \begin{align*}
        \Tr_C(P)(x \mid a) &= \sum_{c \in C} P(x, c \mid a, c) \\
        &= \sum_{c \in C} \delta_{x, h(a, c)} \delta_{c, g(a, c)}
    \end{align*}

    The factor $\delta_{c, g(a, c)} = 1$ precisely when $g(a,c) = c$, therefore $$ \Tr_C(P)(x \mid a) = |F| $$ where $$ F = \{ c \in C : g(a,c)=c \text{ and } h(a,c) = x \}$$ Summing over $x$:
    \begin{align*}
        \sum_{x \in X} \Tr_C(P)(x\mid a) &= \sum_{x \in X} \sum_{c \in C} \delta_{x, h(a, c)} \delta_{c, g(a, c)} \\
        &= \sum_{c \in C} \delta_{c, g(a,c)} \sum_{x\in X} \delta_{x, h(a, c)} \\
        &= \sum_{c \in C} \delta_{c, g(a,c)}
    \end{align*}
    The last step following from the fact that $\sum_{x\in X} \delta_{x, h(a, c)} = 1$ for every fixed $(a,c)$. The right hand side is precisely the fixed points of $g$, and so $\Tr_C(P) \in \mathbf{Stoch}$ if and only if $g$ has exactly one fixed point for each $a$. This completes the equivalence.
    
    Let $c_a$ be the fixed point for some $a$ such that $g(a,c_a) = c_a$, then every term in the sum $$ \sum_{x \in X} \Tr_C(P)(x\mid a) = \sum_{x \in X} \sum_{c \in C} \delta_{x, h(a, c)} \delta_{c, g(a, c)} $$ goes to zero except when $c = c_a$. So the sum reduces to the single contribution from $$ \Tr_C(P)(x \mid a) = \delta_{x, h(a, c_a)} $$ Thus the traced process is the deterministic function $$ a \mapsto h(a, c_a)$$
\end{proof}

\section{Bell-State Realizations}\label{sec:Bell}

Let $C$ be a finite set as before, with $|C|=d$. Let $\HC$ be a $d$-dimensional Hilbert space with orthonormal basis given by $\{ \ket{c} : c \in C \}$, and let $\HCC$ be a copy of $\HC$. We will need to differentiate between different input and output physical tensor factors, so let these be denoted by $$ \HCin \cong \HCout \cong \HCC $$ Let $P$ be stochastic such that $P : A \times C \to X \times C$ as before, and assume nonempty $A$, $X$, $C \neq \emptyset$. 

Let $\mathcal{L}(\mathcal{H})$ denote the linear transformations of a Hilbert space. Associate a representation of $P$ to the quantum channel $$ \mathcal{E}_P : \mathcal{L}(\mathcal{H}_A \otimes \HCin) \to \mathcal{L}(\mathcal{H}_X \otimes \HCout) $$ defined for some quantum state $\rho$ by
\begin{equation}\label{eq:E_P}
    \mathcal{E}_P(\rho) = \sum_{\substack{a \in A , \\ c_{\mathrm{in}}, c_{\mathrm{out}} \in C, \\ x \in X}} P(x,c_{\mathrm{out}} \mid a,c_{\mathrm{in}}) \bra{a,c_{\mathrm{in}}} \rho \ket{a,c_{\mathrm{in}}} \ket{x, c_{\mathrm{out}}} \bra{ x,c_{\mathrm{out}} }
\end{equation} This assumes the natural `measure-and-prepare' embedding of a stochastic matrix into a quantum channel, for example see \cite{horodecki2003}.

Let the external input be the classical state from input set $A$ given by 
\begin{equation}\label{eq:rhoA}
    \rho_A = \sum_{a \in A} p(a) \ket{a}\bra{a}
\end{equation} Consider the following procedure: 
\begin{enumerate}
    \item Prepare the state $|\Phi\rangle_{C_{\mathrm{in}}\widetilde{C}}$,
    \item Apply $\mathcal{E}_P$ to systems $A \otimes C_{\mathrm{in}}$ producing a state on $X \otimes C_{\mathrm{out}}$, 
    \item Project $C_{\mathrm{out}} \otimes \widetilde{C}$ into $\ket{\Phi}_{C_{\mathrm{out}} \widetilde{C}}$
    \item Retain the resulting unnormalised state on $X$ as $\sigma_X$.
\end{enumerate}

We may describe this retained state by $$ \sigma_X = \frac{1}{d^2} \sum_{a,x} p(a) \Tr_C(P)(x \mid a) \ket{x}\bra{x} $$ and so by substitution we may obtain $$ p_\Phi = \frac{1}{d^2} \sum_{a \in A} p(a) \sum_{x \in X} \Tr_C(P)(x \mid a) $$

\begin{theorem}\label{thm:quantumTFAE}
    The following are equivalent:
    \begin{enumerate}
        \item $\Tr_C(P) \in \mathbf{Stoch}$
        \item For each $a \in A$, $\tr(H_a) = 1$,
        \item For each classical probability distribution $p$ on A $$ p_\Phi = \frac{1}{d^2} $$
    \end{enumerate}
\end{theorem}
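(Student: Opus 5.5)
The equivalence of conditions 1 and 2 is already given by Theorem \ref{thm:stoch-is-Tr}, so the new content is the equivalence of 1 and 3. First I will make the displayed formulas for $\sigma_X$ and $p_\Phi$ rigorous. Write $\ket{\Phi} = \frac{1}{\sqrt d}\sum_{c}\ket{c}\ket{c}$. The joint state after step 1 is $\rho_A\otimes\ketbra{\Phi}{\Phi}$. Applying $\mathcal{E}_P\otimes\id_{\widetilde C}$ only reads the diagonal entries $\bra{a,c_{\mathrm{in}}}\cdot\ket{a,c_{\mathrm{in}}}$ on $A\otimes C_{\mathrm{in}}$. Hence only the terms $\frac1d p(a)\ket{c}\bra{c}_{\widetilde C}$ with $c_{\mathrm{in}}=c$ survive, and the state becomes
\[
\frac1d\sum_{a,c,x,c'} p(a)\,P(x,c'\mid a,c)\,\ketbra{x}{x}\otimes\ketbra{c'}{c'}_{C_{\mathrm{out}}}\otimes\ketbra{c}{c}_{\widetilde C}.
\]
Projecting with $\ketbra{\Phi}{\Phi}_{C_{\mathrm{out}}\widetilde C}$ and tracing out these two registers gives the weight $\bra{\Phi}(\ket{c'}\ket{c})\,(\bra{c'}\bra{c})\ket{\Phi} = \frac1d\delta_{c,c'}$. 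This yields the formula for $\sigma_X$ with the factor $1/d^2$, and by the trace formula for $\Tr_C$ the remaining sum is $\sum_c P(x,c\mid a,c)=\Tr_C(P)(x\mid a)$. Taking the trace of $\sigma_X$ then gives $p_\Phi$.

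Second, by the computation in the proof of Theorem \ref{thm:stoch-is-Tr}, $\sum_x \Tr_C(P)(x\mid a)=\tr(H_a)$. Therefore
\[
p_\Phi=\frac{1}{d^2}\sum_{a\in A}p(a)\,\tr(H_a),
\]
so $p_\Phi$ is an affine-linear functional of $p$ (in fact linear on the simplex).

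For $2\Rightarrow 3$: if every $\tr(H_a)=1$, then $p_\Phi=\frac{1}{d^2}\sum_a p(a)=\frac1{d^2}$ for every distribution $p$. For $3\Rightarrow 2$: take $p$ to be the point mass $\delta_{a_0}$ for each $a_0\in A$. This is allowed because condition 3 quantifies over all distributions and $A\neq\emptyset$. It gives $\tr(H_{a_0})/d^2=1/d^2$, so $\tr(H_{a_0})=1$.

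I expect the main obstacle to be the first step. One must track the tensor-factor bookkeeping carefully ($C_{\mathrm{in}}$ versus $\widetilde C$ versus $C_{\mathrm{out}}$), confirm that the measure-and-prepare channel kills the off-diagonal coherences of $\ket{\Phi}$, and check that the two normalisation factors of $1/\sqrt d$ from preparation and projection combine to exactly $1/d^2$. I would verify this by the explicit index computation above rather than diagrammatically, though the snake-equation reading (cup then cap equals $\Tr_C$ up to the scalar $1/d^2$) serves as a sanity check.
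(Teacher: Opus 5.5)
Your proposal is correct and follows essentially the same route as the paper. It uses the same index computation: the measure-and-prepare channel keeps only the $c=\tilde c$ terms, and $|\langle\Phi|c',c\rangle|^2=\delta_{c,c'}/d$ supplies the $1/d^2$ factor. It then takes point-mass distributions $p=\delta_{a_0}$ for the converse. The only cosmetic difference is that you prove $2\Leftrightarrow 3$ directly via $p_\Phi=\frac{1}{d^2}\sum_a p(a)\tr(H_a)$, whereas the paper argues $1\Rightarrow 3\Rightarrow 1$ and appeals to Theorem~\ref{thm:stoch-is-Tr} for condition 2.
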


\begin{proof}
    The initial joint state is given by $\rho_A \otimes \ket{\Phi_{\mathrm{in}}}\bra{
    \Phi_{\mathrm{in}}}$ where $$ \ket{\Phi_{\mathrm{in}}} = \frac{1}{\sqrt{d}} \sum_{c \in C} \ket{c}_{C_{\mathrm{in}}} \otimes \ket{c}_{\widetilde{C}} $$ and so $$ \ket{\Phi_{\mathrm{in}}}\bra{\Phi_{\mathrm{in}}} = \frac{1}{d} \sum_{c, c' \in C} \ket{c}\bra{c'}_{C_{\mathrm{in}}} \otimes \ket{c}\bra{c'}_{\widetilde{C}} $$ 
    
    The joint input state, tensoring with eq. (\ref{eq:rhoA}), is then $$ \frac{1}{d} \sum_{a,c,\tilde{c}} p(a) \ket{a,c}\bra{a,\tilde{c}}_{A C_{\mathrm{in}}} \otimes \ket{c}\bra{\tilde{c}}_{\widetilde C} $$ Recalling the definition of $\mathcal{E}_P$ from eq. (\ref{eq:E_P}), it can be noted that by applying $\mathcal{E}_P \otimes \id_{\widetilde{C}} $, only matrix elements diagonal in the basis $ \{\ket{a, c_{\mathrm{in}}}\}$ contribute terms. In fact, $$ \bra{a_0, c_0} (\ket{a, c} \bra{a, \tilde{c}} ) \ket{a_0, c_0} = \delta_{a_0, a} \delta_{c_0, c} \delta_{c_0, \tilde{c}} $$ As such, the contribution is zero unless $c = \tilde{c}$. 

    After applying $\mathcal{E}_P \otimes \id_{\widetilde{C}} $, the resulting state is given by $$ \omega_{X C_{\mathrm{out}}\widetilde C} = \frac{1}{d} \sum_{a,c,x,c'} p(a) P(x,c'\mid a,c) \ket{x,c'} \bra{x, c'}_{X C_{\mathrm{out}}} \otimes \ket{c}\bra{c}_{\widetilde C} $$ with $c$ labeling the basis in $C_{\mathrm{in}}$ and $c'$ labeling the basis in $C_{\mathrm{out}}$.

    Now define $$ \ket{\Phi_{\mathrm{out}}} = \frac{1}{\sqrt{d}} \sum_{k \in C} \ket{k}_{C_{\mathrm{out}}} \otimes \ket{k}_{\widetilde{C}} $$ and let the projector $\Pi_{\mathrm{out}} = \ket{\Phi_{\mathrm{out}}} \bra{\Phi_{\mathrm{out}}}$.

    Next, postselect on this Bell outcome on $C_{\mathrm{out}} \otimes \widetilde{C}$, then the final unnormalised state on $X$ is given by
    \begin{equation}\label{eq:sigmaXQuantum}
        \sigma_X = \Tr_{C_{\mathrm{out}}\widetilde{C}} [(I_X \otimes \Pi_{\mathrm{out}}) \omega_{X C_{\mathrm{out}} \widetilde{C}}]
    \end{equation}

    For each $c, c' \in C$
    \begin{align*}
        \braket{\Phi_{\mathrm{out}}}{c, c'} &= \frac{1}{\sqrt{d}} \sum_{k \in C} \braket{k}{c} \braket{k}{c'} \\
        &= \frac{1}{\sqrt{d}} \delta_{c, c'}
    \end{align*} Thereby 
    \begin{equation*}
        \braOPket{c, c'} {\Pi_{\mathrm{out}}} {c, c'} = \frac{1}{d} \delta_{c, c'}
    \end{equation*} which, if substituted into the definition of $\sigma_X$ in eq. (\ref{eq:sigmaXQuantum}) we get
    \begin{align*}
        \sigma_X &= \frac{1}{d} \sum_{a, c, c', x} p(a) P(x, c' \mid a, c) \frac{1}{d} \delta_{c, c'} \ketbra{x}{x} \\
        &= \frac{1}{d^2} \sum_{a, c, x} p(a) P(x, c \mid a, c) \ketbra{x}{x}\\
        &= \frac{1}{d^2} \sum_{a, x} p(a) \Tr_C(P)(x \mid a) \ketbra{x}{x} 
    \end{align*}

    The probability of obtaining the postselected Bell outcome is given by $p_\Phi = \tr(\sigma_X)$ which is just $$ p_\Phi = \frac{1}{d^2} \sum_{a \in A}p(a) \sum_{x \in X}\Tr_C(P)(x \mid a) $$ By Theorem \ref{thm:stoch-is-Tr}, $\tr (H_a) = 1$ for each $a$ if and only if $\Tr_C(P) \in \mathbf{Stoch}$; hence conditions 1 and 2 are equivalent. 

    If $\Tr_C(P)$ is stochastic then $\sum_{x}\Tr_C(P)(x \mid a) = 1$ for each $a$, and so $$ p_\Phi = \frac{1}{d^2} \sum_{a\in A} p(a) = \frac{1}{d^2} $$ and so 2 implies 3. 
    
    Conversely, suppose that $$ p_\Phi = \frac{1}{d^2} $$ for every probability distribution $p$ on $A$. Fix some $a_0 \in A$ and choose $p(a) = \delta_{a, a_0}$ then $$ \frac{1}{d^2} \sum_{x \in X} \Tr_C(P) (x \mid a_0) = \frac{1}{d^2} $$ and so $$\sum_{x \in X} \Tr_C(P)(x \mid a_0) = 1$$ Since $a_0$ was chosen arbitrarily and $\Tr_C(P)(x \mid a) \geq 0$ it follows that $\Tr_C(P)$ is stochastic and so 3 also implies 1 which by Theorem \ref{thm:stoch-is-Tr} implies 2, and so they are equivalent. 
\end{proof}

With normalised Bell state preparation and postselection, the success probability includes a dimensional factor $1/d^2$ which is generally dependent on the interaction and input. Theorem \ref{thm:quantumTFAE} identifies a class for which it equals $1/d^2$.

\begin{corollary}
   When all of the conditions in Theorem \ref{thm:quantumTFAE} hold, the normalised conditional state is
    \begin{align*}
        \rho^{\mathrm{cond}}_X &= \frac{\sigma_X}{p_\Phi} \\
        &= d^2\sigma_X \\
        &= \sum_{a \in A, x \in X} p(a) \Tr_C(P) (x \mid a) \ketbra{x}{x}
    \end{align*}
    Therefore the postselected Bell construction implements exactly the stochastic map obtained by taking the categorical trace of $P$ over the internal system $C$.
\end{corollary}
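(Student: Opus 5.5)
The corollary is essentially a direct consequence of the computation carried out in the proof of Theorem \ref{thm:quantumTFAE}. The plan is to reuse the closed form for the unnormalised postselected state obtained there,
\begin{equation*}
    \sigma_X = \frac{1}{d^2} \sum_{a, x} p(a)\, \Tr_C(P)(x \mid a)\, \ketbra{x}{x},
\end{equation*}
which holds for any stochastic $P$ and any classical input $\rho_A$ of the form (\ref{eq:rhoA}), independently of whether the conditions of the theorem hold. Normalisation is then a matter of dividing by the success probability $p_\Phi = \tr(\sigma_X)$.

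The first step is to invoke condition 3 of Theorem \ref{thm:quantumTFAE}. Assuming the equivalent conditions hold, it gives $p_\Phi = 1/d^2$ for every distribution $p$ on $A$. In particular $p_\Phi > 0$, so the conditional state $\sigma_X / p_\Phi$ is well defined; this is the only point where the hypothesis is really needed. Substituting gives $\rho^{\mathrm{cond}}_X = d^2 \sigma_X$, and the factor $1/d^2$ in $\sigma_X$ cancels, yielding the displayed diagonal state.

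The second step is to interpret this as the stated map. By condition 1, $\Tr_C(P) \in \mathbf{Stoch}$. Hence
\begin{equation*}
    q(x) = \sum_a \Tr_C(P)(x \mid a)\, p(a)
\end{equation*}
is a probability distribution on $X$, namely the pushforward $\Tr_C(P)\circ p$ in $\mathbf{Stoch}$. Thus $\rho^{\mathrm{cond}}_X = \sum_x q(x)\ketbra{x}{x}$ is a normalised classical state. Since the normalising constant $d^2$ does not depend on $p$, the assignment $p \mapsto \rho^{\mathrm{cond}}_X$ is linear (indeed affine on distributions), and it coincides with the stochastic matrix $\Tr_C(P)$ under the diagonal embedding.

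There is no real obstacle here. The only subtlety worth flagging is that, without the hypothesis, $p_\Phi$ would depend on $p$ and the normalisation would be nonlinear; condition 3 is exactly what removes that dependence. I would state this explicitly to justify the phrase ``implements exactly the stochastic map''.
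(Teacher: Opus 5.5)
Your proposal is correct and follows the paper's own route: reuse $\sigma_X = \frac{1}{d^2}\sum_{a,x} p(a)\,\Tr_C(P)(x\mid a)\,\ketbra{x}{x}$ from the proof of Theorem~\ref{thm:quantumTFAE}, use condition 3 to set $p_\Phi = 1/d^2$, and cancel the factor to get $d^2\sigma_X$. Your extra remarks make explicit what the paper leaves implicit in its ``therefore'' sentence: $p_\Phi>0$ makes the conditioning well defined, and condition 1 makes the output the pushforward of $p$ along $\Tr_C(P)$ with a normaliser that does not depend on $p$.
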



The construction above is a P-CTC in the sense of Lloyd et al.\ \cite{Lloyd2011-zc,Lloyd2011b} applied to the channel $\mathcal{E}_P$ rather than to a unitary. For a unitary with input $|\psi\rangle$ the same Bell-state preparation and postselection yield the unnormalised output $$\frac{1}{d}\,\mathrm{Tr}_C(U)\ket{\psi}$$ where the partial trace of the operator is given by $$\mathrm{Tr}_C(U) = \sum_{c \in C} \langle c|_{C_{\mathrm{out}}} U |c\rangle_{C_{\mathrm{in}}}$$ and success probability is then just $$\|\mathrm{Tr}_C(U)\,\ket{\psi}\|^2 / d^2$$ 

\begin{corollary}
    The following are equivalent: 
    \begin{center}
    $\Tr_C(P)$ is proportional to a stochastic map \\
    $\iff$ \\
    $p_\Phi$ is independent of $p$ \\
    $\iff$ \\
    $\tr (H_a)$ is constant in $a$.
    \end{center}
\end{corollary}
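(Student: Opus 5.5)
The plan is to reduce all three conditions to one scalar function of $a$, namely
$$ s(a) := \sum_{x \in X} \Tr_C(P)(x \mid a). $$
The proof of Theorem \ref{thm:stoch-is-Tr} already shows $s(a) = \tr(H_a)$ for every $a \in A$. The proof of Theorem \ref{thm:quantumTFAE} shows
$$ p_\Phi = \frac{1}{d^2}\sum_{a \in A} p(a)\, s(a). $$
Neither computation used any normalisation hypothesis, so both hold for arbitrary stochastic $P$. I read ``proportional to a stochastic map'' as: there is a scalar $\lambda \geq 0$ and a stochastic $S : A \to X$ with $\Tr_C(P) = \lambda S$. Here $\lambda = 0$ is allowed, which covers the fully inconsistent case where the trace vanishes.

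First I show that condition 1 is equivalent to condition 3. Suppose $\Tr_C(P) = \lambda S$. Summing over $x$ gives $s(a) = \lambda$ for all $a$, so $\tr(H_a) = \lambda$ is constant. Conversely, suppose $\tr(H_a) = \lambda$ for all $a$. If $\lambda > 0$, I set $S = \lambda^{-1}\Tr_C(P)$; it is non-negative and its columns sum to $1$, so $S$ is stochastic. If $\lambda = 0$, non-negativity forces $\Tr_C(P) = 0$, and this equals $0 \cdot S$ for any stochastic $S$. Such an $S$ exists because $X \neq \emptyset$; for instance, fix $x_0 \in X$ and send every $a$ to $x_0$.

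Next I show that condition 3 is equivalent to condition 2. If $s(a) = \lambda$ for all $a$, then $p_\Phi = \lambda/d^2$ for every distribution $p$, which does not depend on $p$. For the converse I mirror the last step of Theorem \ref{thm:quantumTFAE}. I evaluate $p_\Phi$ at the point masses $p = \delta_{\cdot, a_0}$ and $p = \delta_{\cdot, a_1}$ for arbitrary $a_0, a_1 \in A$. Independence of $p$ gives $s(a_0)/d^2 = s(a_1)/d^2$, so $s$ is constant. This uses $A \neq \emptyset$, which the section assumes.

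There is no real obstacle; the only delicate point is the interpretation of ``proportional''. I will state explicitly that the constant $\lambda$ may be zero, so that the all-paradox case is included. I will also note that the stochastic case $\lambda = 1$ recovers Theorem \ref{thm:quantumTFAE}.
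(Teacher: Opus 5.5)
Your proposal is correct and follows essentially the paper's own route. The paper's brief remark after the corollary rests on the same two facts: the common value $\lambda$ of $\tr(H_a)$ gives success probability $\lambda/d^2$, and for $\lambda>0$ the conditional map is $\lambda^{-1}\Tr_C(P)$, with $\lambda=0$ allowed as a zero multiplier. Your point-mass argument for the converse, and your use of $X\neq\emptyset$ to produce a stochastic $S$ when $\lambda=0$, spell out steps the paper leaves implicit by borrowing from the proof of Theorem~\ref{thm:quantumTFAE}.
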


Note here that ``proportional'' permits a non-negative scalar multiplier. Let $\lambda$ be the common value of $\tr(H_a)$, then the probability of success is just $\lambda/d^2$. For $\lambda > 0$ the conditional stochastic map is $\lambda^{-1} \Tr_C(P)$. If $\lambda = 0$ then postselection never succeeds with no conditional output getting defined.

\section{Acknowledgements}

The author acknowledges the internet and all it does for sharing knowledge, and Prof. John Morton's $\mathrm{QUES^2T}$ Lab at University College London for kindly hosting Quantum Village and providing the space in which these results came from inspiration to fruition. This work was funded in part by the Mozilla Foundation, to whom Quantum Village is grateful for their belief in, and support of, our work.

\subsection{Use of AI}

GPT5.6-Sol and GPT6-Astra on High effort and Claude Fable-5.1 on Higher effort were used in the review process for this work. Their suggestions on mathematics were helpful, but many responses only led to confusion, as did their attempted rewrites of key surrounding sections. The text of this paper is entirely human written, with SPAG errors being identified through spellchecking software and with some expository passages revised following AI suggestions.

\bibliographystyle{plain}
\bibliography{main}

\end{document}